\newcommand{\A}{\mathfrak{A}}
\newcommand{\1}{\mathds{1}}
\newcommand{\al}{\alpha}
\newcommand{\eps}{\varepsilon}
\newcommand{\la}{\lambda}
\newcommand{\obj}{\mathrm{Obj}}
\newcommand{\Loc}{\mathrm{\mathbf{Loc}}} 
\newcommand{\TLoc}{\mathrm{\mathbf{Loc}}^{\otimes}}
\newcommand{\Obs}{\mathrm{\mathbf{Obs}}} 
\newcommand{\TObs}{\mathrm{\mathbf{Obs}}^{\otimes}}
\newcommand{\sst}[1]{\scriptscriptstyle{#1}}
\newcommand{\be}{\begin{equation}}
\newcommand{\ee}{\end{equation}}
\newcommand{\NN}{\mathbb{N}} 
\newcommand{\CC}{\mathbb{C}} 
\newcommand{\Lcal}{\mathcal {L}}
\newcommand{\Bcal}{\mathcal {B}}
\newcommand{\Kcal}{\mathcal{K}}
\newcommand{\Hcal}{\mathcal{H}}
\newcommand{\Ncal}{\mathcal{N}}
\newcommand{\Mcal}{\mathcal{M}}
\newcommand{\Ocal}{\mathcal{O}}
\newcommand{\Rcal}{\mathcal{R}}
\newcommand{\mintens}{\otimes_{\sst{\mathrm{min}}}}
 \newtheorem{thm}{Theorem}[section]
 \newtheorem{prop}[thm]{Proposition}
 \theoremstyle{definition}
 \newtheorem{defn}[thm]{Definition}
 \theoremstyle{remark}
 \numberwithin{equation}{section}
\begin{document}

%
%
%
%
%
%
%
%
%

\title{The Locality Axiom in Quantum Field Theory and \\Tensor Products of $C^*$-algebras}
\author{Romeo Brunetti, Klaus Fredenhagen, Paniz Imani, Katarzyna Rejzner}
\maketitle
\begin{abstract}
The prototype of mutually independent systems are systems which are localized in spacelike separated regions. In the framework of locally covariant quantum field theory we show that the commutativity of observables in spacelike separated regions can be encoded in the tensorial structure of the functor which associates unital $C^*$-algebras (the local observable algebras) to globally hyperbolic spacetimes. This holds under the assumption that the local algebras satisfy the split property and involves the minimal tensor product of $C^*$-algebras. 
\end{abstract}
\section{Introduction}
One of the crucial properties of properties of quantum field theory is the commutativity of spacelike localized observables, a property often termed {\it locality} or {\it Einstein causality}. It was shown by Roos \cite{Roos} that this property implies that the algebra generated by the observables in two spacelike separated regions is isomorphic to a tensor product of the algebras associated to these regions. Unfortunately, $C^*$-algebras do not possess, in general, unique tensor products (see, e.g. \cite{Takesaki}), so it remained open which tensor product occurs in quantum field theory.

A partial answer can be given when the theory satisfies the so-called split property. The split property states that between algebras of regions such that the closure of the smaller is contained in the interior of the larger region there exists an intermediate type I factor, i.e. an algebra isomorphic to the algebra of all bounded operators on a Hilbert space. This property was conjectured by Borchers, first proven by Buchholz for the free scalar field \cite{Buchholz} and in depth analyzed  by Doplicher and Longo \cite{DL}. Under this assumption it follows that the algebra generated by observables of regions whose closures are spacelike separated is the minimal tensor product.

After generalizing the Haag-Kastler nets of local observables associated to subregions of Minkowski space \cite{HK} to a functor from globally hyperbolic spacetimes to algebras (the concept of locally covariant quantum field theory \cite{BFV,HW}) it soon became clear that the axiom of Einstein causality is related to a tensor structure of the functor \cite{BF,FR}. On the category of spacetimes, with causality preserving isometric embeddings as morphisms, the tensor product is just the disjoint union, and morphisms are embeddings where the connected components are mapped to spacelike separated regions. It was, however, not clear how to choose the tensor structure on the category of $C^*$-algebras. A related problem occurs in perturbative algebraic quantum field theory. But there one replaces the $C^*$-algebras by locally convex topological algebras which turn out to be nuclear and therefore have a unique tensor product \cite{BFL-R}.

In the present paper we show that under the assumption that the split property holds for connected spacetimes, the causality axiom is equivalent to the tensorial property of the functor extended to disconnected spacetimes. Here the category of $C^*$-algebra is equipped with the minimal (or spatial) tensor product.  
\section{Locally covariant quantum field theory}
The framework of locally covariant quantum field theory was developed in \cite{BFV,HW}. It allows to 
construct the theory simultaneously on all spacetimes of a given class in a coherent way. To this end it is convenient to use the language of category theory. Therefore, let us now define the categories that will be used in this work. The first one, denoted by $\Loc$ is related to the notion of physical spacetimes and their embeddings. Its objects are all n-dimensional ($n\geq2$ is fixed) spacetimes\footnote{By spacetime we mean a smooth Hausdorff, paracompact and connected manifold.} $\Mcal\doteq(M,g)$, which are globally hyperbolic, oriented and time-oriented\footnote{The assumptions are adapted to the case of scalar fields. In more general cases one might be forced to impose more restrictive conditions. For detailed discussion see e.g. \cite{DaLa}.} and  $g$ is a smooth Lorentzian metric. Morphisms of $\Loc$ are defined as
 isometric orientation and time-orientation preserving embeddings that are additionally causality preserving, i.e. they fulfill the following condition: given $(M_1,g_1),(M_2,g_2)\in\obj(\Loc)$, for any causal curve $\gamma : [a,b]\to M_2$, if $\gamma(a),\gamma(b)\in\chi(M_1)$ then for all $t	\in ]a,b[$ we have: $\gamma(t)\in\chi(M_1)$. We will call an embedding satisfying these criteria \emph{admissible}.
 
In local quantum physics one assigns to regions of spacetime algebras of observables. To formulate this assignment in the category theory language, we need to define the category of observables $\Obs$. In quantum theory it can be chosen as the category of unital  $C^*$-algebras with injective $*$-homomorphisms as morphisms. A locally covariant QFT is defined by assigning to spacetimes  $\Mcal$ corresponding unital $C^*$-algebras $\A(\Mcal)$. This assignment has to fulfill a set of axioms, which generalize the Haag-Kastler axioms:
\begin{enumerate}
\item	If $\chi: \Ncal \rightarrow \Mcal$ is an admissible embedding, then $\alpha_\chi:\ \A(\Ncal)\rightarrow \A(\Mcal)$ is a unit preserving injective $C^*$-homomorphism (\textit{subsystems}),
\item	Let $\chi:\Ncal \rightarrow \Mcal$, $\chi':\Mcal\rightarrow \Lcal$ be admissible embeddings, then
$\alpha_{\chi'\circ\chi} = \alpha_{\chi'}\circ\alpha_\chi$ (\textit{covariance}),
\item If $\chi_1 : \Ncal_1 \rightarrow \Mcal$, $\chi_2 : \Ncal_2 \rightarrow \Mcal$ are admissible embeddings such that $\chi_1(\Ncal_1)$ and $\chi_2(\Ncal_2)$ are spacelike separated in $\Mcal$ then
$[\alpha_{\chi_1} (\A(\Ncal_1)), \alpha_{\chi_2} (\A(\Ncal_2))] = {0}$ (\textit{Einstein causality}),
\item	If $\chi(\Ncal)$ contains a Cauchy surface of $\Mcal$ then $\alpha_\chi(\A(\Ncal )) = \A(\Mcal)$ (\textit{timeslice axiom}).
\end{enumerate}
Axioms 1 and 2 mean simply that $\A$ is a covariant functor $\A$ between $\Loc$ and $\Obs$, with $\A\chi:=\alpha_\chi$. 
The third axiom is related to the tensorial structure of the underlying categories and we will now focus on its categorical formulation.
We would like to extend  $\Loc$ and $\Obs$ to tensor categories. By this we mean strictly monoidal categories. Following \cite{MacLane}
we call a category $\mathbf{C}$ strictly monoidal if there exists a bifunctor $\otimes:\mathbf{C}\times\mathbf{C}\rightarrow\mathbf{C}$ which is associative, i.e. $\otimes(\otimes\times 1)=\otimes( 1\times\otimes)$ and there exists an object $e$ which is a left and right unit for $\otimes$.

The category $\Loc$  can be extended to a tensor category if we drop the condition of connectedness and define $\TLoc$ as the category whose objects are 
finite disjoint unions of elements of $\obj(\Loc)$, \[\Mcal=\Mcal_1\sqcup \ldots\sqcup \Mcal_N\] where $\Mcal_i\in\obj(\Loc)$. The tensor product is defined as the disjoint union $\otimes\doteq \sqcup$
and the unit $e$ is the empty set $\varnothing$. The admissible embeddings are maps $\chi: \Mcal_1\sqcup \ldots\sqcup \Mcal_n\rightarrow  \Mcal$ such that each component satisfies the requirements for a morphism of $\Loc$ and additionally all images are spacelike to each other, i.e., $\chi(\Mcal_1) \perp\ldots\perp\chi(\Mcal_n)$. 

Now we want to extend the category $\Obs$ to a tensor category. Since there is no unique tensor structure on general locally convex vector spaces, one has to make a  choice of the tensor structure basing on some physical requirements. In particular one can define the minimal and the maximal $C^*$-tensor norm. The minimal norm $\|.\|_{\textrm{min}}$ is defined as:
\[
\|A\|_{\textrm{min}} \doteq \sup\{\|(\pi_1\otimes\pi_2)(A)\|_{\Bcal(\Hcal_1\otimes\Hcal_2)}\}\ , \quad A\in\A_1\otimes\A_2\,,
\]
where $\pi_1$ and $\pi_2$ run through all representations of $\A_1$ and of $\A_2$ on Hilbert spaces $\Hcal_1$, $\Hcal_2$ respectively. $\Bcal$ denotes the algebra of bounded operators. If we choose $\pi_1$ and $\pi_2$ to be faithful, then the
supremum is achieved, i.e. $\|A\|_{\textrm{min}}=\|(\pi_1\otimes\pi_2)(A)\|_{\Bcal(\Hcal_1\otimes\Hcal_2)}$.
The completion of the algebraic tensor product $\A_1\otimes\A_2$  with respect to the minimal norm $\|A\|_{\textrm{min}}$ is a $C^*$-algebra and is denoted by $\A_1\underset{\sst{\textrm{min}}}{\otimes}\A_2$. 

We will argue now that, under some further assumptions concerning the QFT functor $\A$, the minimal $C^*$-tensor product is a natural choice for introducing a tensor structure on $\Obs$. Firstly, we assume $\A$ to be additive in the following sense:
\begin{defn}\label{additiv}
Let $\Ocal\in\obj(\Loc)$ and let $\{\chi_i\in\hom(\Ocal_i,\Ocal)\}_{i\in I}$ be a family of morphisms such that $\overline{\chi_i(\Ocal_i)}\subset \Ocal$ are compact, $\Ocal\subset\bigcup\limits_{i\in I}\chi_i(\Ocal_i)$ and it holds $\overline{\chi_i(\Ocal_i)}\subset \chi_{i+1}(\Ocal_{i+1})$ (i.e. we have an increasing family of subsets covering $\Ocal$). We say that $\A$ is additive if 
\[
\A(\Ocal)=\overline{\bigcup\limits_{i\in I}\al_{\chi_i}(\A(\Ocal_i))}\,.
\]
holds for all such families, where the bar denotes the closure in the norm topology.
\end{defn}

Moreover we require  $\A$ to satisfy the \textit{split property}\footnote{The split property was originally defined for inclusions of von Neumann algebras. Here we introduce a corresponding notion, which is adapted to the locally covariant setting. A weaker condition, called \textit{intermediate factoriality}, was already discussed in \cite{BFV}.}. 
\begin{defn}\label{split}
We say that $\A$ satisfies the split property if the following two conditions are satisfied:
\begin{enumerate}
\item For every two spacetimes
$\Mcal\xhookrightarrow[\chi]{}\Ncal$ in $\obj(\Loc)$, with compact and connected closure of the image $\overline{\chi(M)}$, contained in the interior of $N$, there exist a type I von Neumann factor $\Rcal$ such that
\[
\al_\chi(\A(\Mcal))\subset \Rcal\subset \A(\Ncal)\,,
\]
\item If $\psi\in\hom(\Ncal,\Lcal)$, $\Lcal\in\obj(\Loc)$, then
for every increasing net of elements $A_i\in\Rcal$ such that $A_i\nearrow A$, it holds
\[
\sup_i \al_\psi(A_i)=\al_\psi(A)\,.
\]
\end{enumerate}
\end{defn}
The second condition guarantees that in the sequence below the inclusion $\al_\psi:\Rcal\rightarrow \tilde{\Rcal}$ is $\sigma$-continuous
\[
\al_\psi\circ \al_\chi(\A(\Mcal))\subset \al_\psi(\Rcal)\subset  \al_\psi(\A(\Ncal))\subset\tilde{\Rcal}\subset\A(\Lcal)\,.
\]
In other words, given a normal state $\tilde{\Rcal}_*\ni\omega$ one obtains a normal state in $\Rcal_*$ by taking $\al^*_\psi\omega\upharpoonright_\Rcal$.

From now on we assume that properties \ref{split} and \ref{additiv} hold for the QFT functor $\A$. Let us now fix a spacetime $\Mcal=(M,g)\in\obj(\Loc)$ and consider a family $\Kcal(\Mcal)$ of subsets of $M$ which are relatively compact and are objects of $\Loc$, when equipped with $g_{\Ocal}$, the restriction of the metric $g$. The inclusion map $\iota_{M,\Ocal}:\Ocal\rightarrow M$ is a morphism in $\hom((\Ocal,g_{\Ocal}),(M,g))$. We denote $\al_{M,\Ocal}\doteq \A\iota_{M,\Ocal}$ and 
\[
\A_{\Mcal}(\Ocal)\doteq\al_{M,\Ocal}(\A((\Ocal,g_{\Ocal})))\,.
\]
It was shown in \cite{BFV} that if $\A$ is a covariant functor satisfying in addition axioms 3 and 4, then $\{\A_{\Mcal}(\Ocal)\}_{\Ocal\in\Kcal(\Mcal)}$ is a Haag-Kastler net on the spacetime $\Mcal$. It is also easy to see that if $\A$ satisfies \ref{split} and \ref{additiv}, then additivity and split property hold for the corresponding Haag-Kastler net. With these two properties we can write a local algebra $\A_{\Mcal}(\Ocal)$ as an inductive limit of type I factors, since the additivity implies that
\[
\A_{\Mcal}(\Ocal)=\overline{\bigcup\limits_{i\in I}\A_{\Mcal}(\Ocal_i)}\,,
\]
where $\{\Ocal_i\}_{i\in I}$ is an increasing family of relatively compact, contractible, causally convex subsets of $M$ such that $\overline{\Ocal_i}\subset\Ocal_{i+1}$ (we will denote such inclusion as $\Ocal_i\Subset\Ocal_{i+1})$ and $\Ocal\subset\bigcup\limits_{i\in I}\Ocal_i$. From the split property it follows that we can write this inductive limit with type I factors
\[
\A(\Ocal)=\overline{\bigcup\limits_{i\in I}\Rcal^i}\,,
\]
where $\A_{\Mcal}(\Ocal_i)\subset\Rcal_i\subset\A_{\Mcal}(\Ocal_{i+1})$.

Let us now consider two local algebras associated with spacelike separated regions $\Ocal_1, \Ocal_2\subset M$. 
The split property allows us to write them as inductive limits of increasing families of type I factors: $\A_{\Mcal}(\Ocal_1)=\overline{\bigcup\limits_{i\in I}\Rcal_1^i}$, $\A_{\Mcal}(\Ocal_2)=\overline{\bigcup\limits_{j\in J}\Rcal_2^j}$. Let $\Ocal$ be a region of $M$ containing both  $\Ocal_1$ and $\Ocal_2$. We will prove now that the operator norm 
on the algebra generated by $\A_{\Mcal}(\Ocal_1)$ and $\A_{\Mcal}(\Ocal_2)$
coincides with the minimal $C^*$-tensor norm of 
$\A_{\Mcal}(\Ocal_1)\underset{\sst{\textrm{min}}}{\otimes}\A_{\Mcal}(\Ocal_2)$.
\begin{thm}\label{minimal}
Let $\A$ be a QFT functor 
which is additive and satisfies the split property,
and let $\A_{\Mcal}$ be the induced local net on the spacetime $\Mcal\in\obj(\Loc)$. For spacelike separated regions $\Ocal_1$, $\Ocal_2$ contained in $\Ocal\subset M$\footnote{Note that $\Ocal_1$ and $\Ocal_2$ are not assumed to be relatively compact.} the 
map 
\[ \alpha:\left\{\begin{array}{ccc}
                     \A_{\Mcal}(\Ocal_1)\otimes\A_{\Mcal}(\Ocal_2)&\to& \A_{\Mcal}(\Ocal)\\
                     A_1\otimes A_2&\mapsto & A_1A_2 
                     \end{array}\right.\]
is isometric with respect to the minimal norm, $\|\alpha(A)\|=\|A\|_{\sst{\mathrm{min}}}$.
\end{thm}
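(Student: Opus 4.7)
The plan is to work inside a faithful representation of $\A_\Mcal(\Ocal)$ on a Hilbert space $\Hcal$, exhibit each of the two local algebras as the closure of an increasing union of commuting type~I factors, identify the $C^*$-algebra generated inside $\Bcal(\Hcal)$ by each such commuting pair with their minimal tensor product, and then pass to the limit.

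First I would choose increasing families of relatively compact, contractible, causally convex subregions $\Ocal_k^i \Subset \Ocal_k^{i+1}$ ($k=1,2$) covering each $\Ocal_k$, small enough that $\Ocal_1^i$ and $\Ocal_2^j$ remain spacelike for all pairs $i,j$. The split property applied to each inclusion $\Ocal_k^i \Subset \Ocal_k^{i+1}$ produces type~I factors $\Rcal_k^i$ with $\A_\Mcal(\Ocal_k^i) \subset \Rcal_k^i \subset \A_\Mcal(\Ocal_k^{i+1})$, and the additivity identity recalled in the excerpt gives $\A_\Mcal(\Ocal_k) = \overline{\bigcup_i \Rcal_k^i}$. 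Einstein causality (axiom~3) applied to the spacelike subregions $\Ocal_1^i, \Ocal_2^j$ then forces $[\Rcal_1^i, \Rcal_2^j] = 0$ inside $\A_\Mcal(\Ocal)$.

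The core step will be to prove that, for commuting type~I factors $\Rcal_1, \Rcal_2 \subset \Bcal(\Hcal)$, the multiplication map $m: \Rcal_1 \mintens \Rcal_2 \to \Bcal(\Hcal)$, $A \otimes B \mapsto AB$, is isometric. The type~I structure of $\Rcal_1 \cong \Bcal(\Kcal_1)$ yields a unitary $\Hcal \cong \Kcal_1 \otimes \Lcal$ under which $\Rcal_1 = \Bcal(\Kcal_1) \otimes \CC\1$; commutativity then forces $\Rcal_2 \subset \Rcal_1' = \CC\1 \otimes \Bcal(\Lcal)$, so $\Rcal_2 = \CC\1 \otimes \Rcal_2^\#$ with $\Rcal_2^\# \cong \Rcal_2$ acting faithfully on $\Lcal$. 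For $T = \sum_k A_k \otimes B_k$ in the algebraic tensor product, the operator $m(T) = \sum_k A_k \otimes B_k^\#$ acts on $\Kcal_1 \otimes \Lcal$ as the image of $T$ under the tensor product of two faithful representations, so by the characterization of $\|\cdot\|_{\min}$ recalled in the excerpt $\|m(T)\|_{\Bcal(\Hcal)} = \|T\|_{\min}$; continuity extends $m$ to the claimed isometry on $\Rcal_1 \mintens \Rcal_2$.

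Applying the core step inside a faithful representation of $\A_\Mcal(\Ocal)$ to each pair $(\Rcal_1^i, \Rcal_2^j)$ yields compatible isometric multiplications $\Rcal_1^i \mintens \Rcal_2^j \hookrightarrow \A_\Mcal(\Ocal)$. Since $\mintens$ is monotone with respect to $C^*$-subalgebra inclusions, these agree with the natural inclusions into $\A_\Mcal(\Ocal_1) \mintens \A_\Mcal(\Ocal_2)$, and the union of such subalgebras is norm-dense; a routine continuity argument then extends $\alpha$ to the claimed global isometry. I expect the main obstacle to be the core step on commuting type~I factors, in particular recognizing $\Rcal_2^\#$ as a faithful representation of $\Rcal_2$ in the bimodule decomposition and ensuring compatibility of this identification with the inductive limits of the $\Rcal_k^i$ on the two sides of the isometry.
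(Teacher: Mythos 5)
Your overall strategy is the same as the paper's: exhibit the two local algebras as norm closures of increasing families of interpolating type~I factors via additivity and the split property, prove the isometry at the level of a commuting pair of type~I factors by a spatial splitting argument, and pass to the limit by density (the paper does this last step with an explicit $2\eps$ estimate, you do it via injectivity of $\mintens$ under subalgebra inclusions; these are equivalent). However, your core step has a genuine gap. You work in an \emph{arbitrary} faithful representation $\pi$ of $\A_{\Mcal}(\Ocal)$ on $\Hcal$ and assert that the type~I factor $\Rcal_1\cong\Bcal(\Kcal_1)$ sits in $\Bcal(\Hcal)$ as $\Bcal(\Kcal_1)\otimes\CC\1$ for some factorization $\Hcal\cong\Kcal_1\otimes\Lcal$. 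That splitting is the Murray--von Neumann structure theorem for type~I factors \emph{acting as von Neumann algebras}, i.e.\ it requires $\pi\upharpoonright_{\Rcal_1}$ to be normal (equivalently, $\pi(\Rcal_1)$ weakly closed). A faithful $*$-representation of the $C^*$-algebra $\A_{\Mcal}(\Ocal)$ restricted to a type~I subfactor need not be normal, and for non-normal commuting copies of $\Bcal(\Kcal)$ (which is not a nuclear $C^*$-algebra) the multiplication map on the algebraic tensor product need not even be bounded for $\|\cdot\|_{\min}$, let alone isometric --- this is essentially the failure of $\Bcal(\Kcal)\otimes_{\max}\Bcal(\Kcal)=\Bcal(\Kcal)\mintens\Bcal(\Kcal)$. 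So the sentence ``commutativity then forces $\Rcal_2\subset\Rcal_1'=\CC\1\otimes\Bcal(\Lcal)$'' rests on a decomposition you have not established.

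This is precisely what the second condition in Definition~\ref{split} (the $\sigma$-continuity of the net inclusions restricted to the interpolating factors) is for, and your proof never invokes it. The paper's route is to interpose a further region $\Ocal_1\cup\Ocal_2\subset\tilde{\Ocal}\Subset\Ocal$ and a further type~I factor $\A_{\Mcal}(\tilde{\Ocal})\subset\Rcal_{ij}\subset\A_{\Mcal}(\Ocal)$; condition~2 guarantees that the inclusions $\Rcal_{1,i}\hookrightarrow\Rcal_{ij}$ and $\Rcal_{2,j}\hookrightarrow\Rcal_{ij}$ are normal, so in the standard (normal, faithful) representation of $\Rcal_{ij}\cong\Bcal(\Hcal_{ij})$ both subfactors are weakly closed commuting type~I factors and your spatial splitting argument becomes legitimate; since $\Rcal_{ij}\subset\A_{\Mcal}(\Ocal)$ isometrically, the norm computed there is the one you want. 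To repair your proof, replace ``a faithful representation of $\A_{\Mcal}(\Ocal)$'' by the standard representation of such an interpolating factor $\Rcal_{ij}$ and justify the normality of the two embeddings from condition~2 of the split property; the rest of your argument (identification of $\Rcal_2^{\#}$ as a faithful copy of $\Rcal_2$, the min-norm computation via a pair of faithful representations, and the density/limit step) then goes through.
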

\begin{proof}  
Using the additivity and split property we can write  generic elements $A_n$, $B_n$  $n=1,\dots,N$ ($N\in\NN$) of $\A_{\Mcal}(\Ocal_1)$ and $\A_{\Mcal}(\Ocal_2)$ respectively as  $A_n=\lim\limits_{i\in I} A^i_n$, $B_n=\lim\limits_{j\in J} B_n^j$, where $ A^i_n\in\Rcal_{1,i}$, $B^j_n\in\Rcal_{2,j}$ and $\Rcal_{1,i}$, $\Rcal_{2,j}$ are type I factors. Since the algebras $\A_{\Mcal}(\Ocal_1)$ and $\A_{\Mcal}(\Ocal_2)$ commute with each other, a generic element of $\A_{\Mcal}(\Ocal_1)\vee\A_{\Mcal}(\Ocal_2)$ can be written as $\sum_{n=1}^N A_nB_n$. To prove the theorem we have to show that
\be\label{min:tensor:norm}
\|\sum_{n=1}^N A_nB_n\|=\|\sum_{n=1}^NA_n \otimes B_n\|_{\textrm{min}}\,.
\ee
Let us choose $A^i_n$,  $B^j_n$ such that $\|\sum\limits_{n=1}^N  A_nB_n-\sum\limits_{n=1}^N  A_n^iB_n^j\|<\eps$ and
 $\|\sum\limits_{n=1}^N  A_n\otimes B_n-\sum\limits_{n=1}^N  A_n^i\otimes B_n^j\|_{\textrm{min}}<\eps$. It follows now that:
\[
-2\eps\leq\|\sum_{n=1}^N A_nB_n\|-\|\sum_{n=1}^N A^i_nB^j_n\|+\|\sum_{n=1}^NA^i_n \otimes B^j_n\|_{\textrm{min}}-\|\sum_{n=1}^NA_n \otimes B_n\|_{\textrm{min}}\leq 2\eps\,.
\] 
To prove \eqref{min:tensor:norm} it remains to prove that the minimal $C^*$ tensor norm $\|.\|_{\textrm{min}}$ coincides for elements $A^i_n\in\Rcal_{1,i}$, $B^j_n\in\Rcal_{2,j}$ with the norm on $\Rcal_{ij}$. Let us consider the following inclusions of subsets of $\Ocal$:
\[
\begin{tikzpicture} \matrix(a)[matrix of math nodes, row sep=1em, column sep=0.3em, text height=2.5ex, text depth=0.2ex, row 1 column 5/.style={blue!80!black}, row 3 column 5/.style={blue!80!black}, row 2 column 7/.style={blue!80!black}] {
 \Ocal_{1,i}          &\Subset          &\Ocal_{1,i+1}      &\Subset    &\Ocal_1        &            &\\
 			    &			   &				 &		   &\tilde{\Ocal}&\Subset&\Ocal\\
\Ocal_{2,j}           &\Subset          &\Ocal_{2,j+1}      &\Subset    &\Ocal_2        &            &\\
};
 \path[-,color=white](a-1-3) edge node[midway, sloped, color=black]{$\subset$} (a-2-5); 
 \path[-,color=white](a-1-5) edge node[above, sloped, color=blue!80!black]{$\subset$} (a-2-7); 
 \path[-,color=white](a-3-5) edge node[below, sloped, color=blue!80!black]{$\subset$} (a-2-7); 
 \path[-,color=white](a-3-3) edge node[midway, sloped, color=black]{$\subset$} (a-2-5); 
\end{tikzpicture}
\]
The existence of an appropriate $\tilde{\Ocal}$ is guaranteed by the fact that $\overline{\Ocal_{1,i+1}}\subset\Ocal_1\subset \Ocal$. One can illustrate the above sequence of inclusions on a picture, see figure \ref{inclusions:sets}.
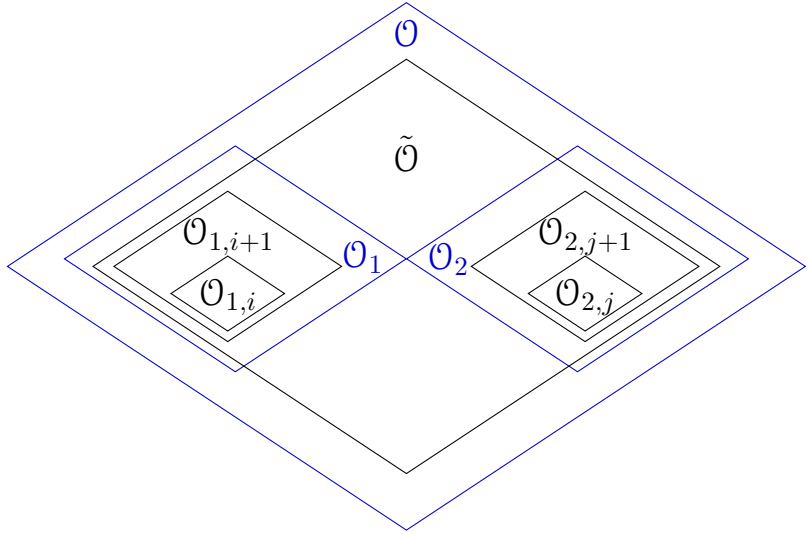
\begin{figure}[tb]
\begin{center}
 \begin{tikzpicture}
\draw (-0.1,-0.36) node[diamond,draw,minimum width=1.5cm,minimum height=1cm] {};
\draw (-0.1,0) node[diamond,draw,minimum width=3cm,minimum height=2cm] {};
\draw (4.6,-0.36) node[diamond,draw,minimum width=1.5cm,minimum height=1cm] {};
\draw (4.6,0) node[diamond,draw,minimum width=3cm,minimum height=2cm] {};
\draw (4.5,0.1) node[diamond,draw,minimum width=4.5cm,minimum height=3cm,color=blue!80!black] {};
\draw (0,0.1) node[diamond,draw,minimum width=4.5cm,minimum height=3cm,color=blue!80!black] {};\draw(-0.1,-0.4) node {\Large{$\Ocal_{1,i}$}};
\draw(-0.1,0.4) node {\Large{$\Ocal_{1,i+1}$}};
\draw(4.6,-0.4) node {\Large{$\Ocal_{2,j}$}};
\draw(4.6,0.4) node {\Large{$\Ocal_{2,j+1}$}};
\draw(1.67,0.13) node [color=blue!80!black]{\Large{$\Ocal_{1}$}};
\draw(2.8,0.1) node [color=blue!80!black]{\Large{$\Ocal_{2}$}};
\draw (2.25,0) node[diamond,draw,minimum width=8.25cm,minimum height=5.5cm] {};
\draw(2.25,1.5) node {\Large{$\tilde{\Ocal}$}};
\draw (2.25,0) node[diamond,draw,minimum width=10.5cm,minimum height=7cm,color=blue!80!black] {};
\draw(2.25,3.1) node[color=blue!80!black] {\Large{$\Ocal$}};
\end{tikzpicture}
\end{center}
\caption{Spacetime configuration of open sets used in the proof.\label{inclusions:sets}}
\end{figure}
To inclusions of open subsets of $\Mcal$ there correspond the following inclusions of algebras:
\[
\begin{tikzpicture} \matrix(a)[matrix of math nodes, row sep=1em, column sep=0.3em, text height=2.5ex, text depth=0.2ex, row 1 column 7/.style={blue!80!black}, row 3 column 7/.style={blue!80!black}, row 2 column 11/.style={blue!80!black}, row 1 column 3/.style={red!70!black}, row 3 column 3/.style={red!70!black}, row 2 column 9/.style={red!70!black}] {
\A_{\Mcal}(\Ocal_{1,i})&\subset &\Rcal_{1,i}&\subset   &\A_{\Mcal}(\Ocal_{1,i+1})&\subset &\A_{\Mcal}(\Ocal_1)        &             &   &           &\\
 			               & 	     &			&		&		 			     &		   &\A_{\Mcal}(\tilde{\Ocal})&\subset&\Rcal_{ij}&\subset&\A_{\Mcal}(\Ocal)\\
\A_{\Mcal}(\Ocal_{2,j}) &\subset &\Rcal_{2,j}&\subset  &\A_{\Mcal}(\Ocal_{2,j+1})&\subset &\A_{\Mcal}(\Ocal_2)        &               &  &           &\\
};
 \path[-,color=white](a-1-5) edge node[midway, sloped, color=black]{$\subset$} (a-2-7); 
  \path[-,color=white](a-3-5) edge node[midway, sloped, color=black]{$\subset$} (a-2-7); 
 \path[-,color=white](a-1-7) edge node[above, sloped, color=blue!80!black]{$\subset$} (a-2-11); 
 \path[-,color=white](a-3-7) edge node[below, sloped, color=blue!80!black]{$\subset$} (a-2-11); 
\end{tikzpicture}
\]
The second condition in the definition of the split property \ref{split} ensures that inclusions of factors $\al_{\Ocal,\Ocal_{1,i+1}}:\Rcal_{1,i}\rightarrow\Rcal_{ij}$,  $\al_{\Ocal,\Ocal_{2,j+1}}:\Rcal_{2,j}\rightarrow\Rcal_{ij}$ are  $\sigma$-continuous. Moreover $\al_{\Ocal,\Ocal_{1,i+1}}(\Rcal_{1,i})$ and $\al_{\Ocal,\Ocal_{2,j+1}}(\Rcal_{2,j})$ commute in $\Rcal_{ij}$, so we can conclude that the minimal $C^*$-norm $\|.\|_{\textrm{min}}$ coincides for elements $A^i_n\in\Rcal_{1,i}$, $B^j_n\in\Rcal_{2,j}$ with the norm in $\A_{\Mcal}(\Ocal)$. Therefore it holds:
 \[
-2\eps\leq \|\sum_{n=1}^N A_nB_n\|-\|\sum_{n=1}^NA_n \otimes B_n\|_{\textrm{min}}\leq 2\eps\,,
 \]
 and since $\eps$ can be chosen arbitrarily small, it follows that \eqref{min:tensor:norm} holds.
\end{proof}

The above argument shows that $\mintens$ is a choice of tensor product compatible with the net structure. We can now prove the following:
\begin{prop}
We can extend $\Obs$ to a tensor category $\TObs$ by defining the tensor functor $\otimes$ as:
\begin{align*}
\A_1\otimes\A_2&\doteq \A_1\mintens\A_2\,,\\
(\al_1\otimes\al_2)(A\otimes B)&\doteq \al_1(A)\otimes\al_2(B)\,,A\in\A_1,B\in\A_2\,,
\end{align*}
where $A\in\A_1$, $B\in\A_2$, $\A_1,\A_2\in\obj(\TObs)$, $\al_1\in\hom(\A_1,\tilde{\A}_1)$, $\al_2\in\hom(\A_2,\tilde{\A_2})$, the class of objects is the same for $\TObs$ as for $\Obs$ and the unit of $\TObs$ is provided by $e\doteq\CC$.
\end{prop}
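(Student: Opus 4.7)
The plan is to verify the three defining properties of a strictly monoidal category in the sense of \cite{MacLane}: that $\otimes$ is a well-defined bifunctor $\TObs\times\TObs\to\TObs$, that it is strictly associative, and that $\CC$ is a two-sided unit. The delicate point will be the behaviour of $\otimes$ on morphisms; associativity and the unit laws amount to bookkeeping once the spatial tensor product is in hand.

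First I would check that $\otimes$ is well-defined on objects: for unital $C^*$-algebras $\A_1,\A_2$, the completion $\A_1\mintens\A_2$ of the algebraic tensor product with respect to $\|\cdot\|_{\mathrm{min}}$ is a unital $C^*$-algebra by standard theory, since the minimal norm is a $C^*$-cross norm and $1\otimes 1$ remains a unit. Next, for morphisms $\al_i\in\hom(\A_i,\tilde{\A}_i)$ I would show that the algebraic map $\al_1\otimes\al_2$ extends to an injective unital $*$-homomorphism between the minimal tensor products. Choosing faithful representations $\tilde{\pi}_i\colon\tilde{\A}_i\to\Bcal(\tilde{\Hcal}_i)$, the compositions $\tilde{\pi}_i\circ\al_i$ are faithful representations of $\A_i$ (since injective $*$-homomorphisms of $C^*$-algebras are isometric), and computing $\|\cdot\|_{\mathrm{min}}$ on both sides through these faithful representations shows directly that $\al_1\otimes\al_2$ is isometric; hence it extends to an injective unital $*$-homomorphism of the completions. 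The functorial identities $(\al_1'\circ\al_1)\otimes(\al_2'\circ\al_2)=(\al_1'\otimes\al_2')\circ(\al_1\otimes\al_2)$ and $\id\otimes\id=\id$ hold on the algebraic tensor product and pass to the completion by continuity.

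For strict associativity I would invoke the classical fact (see e.g.\ \cite{Takesaki}) that the canonical algebraic isomorphism $(\A_1\otimes\A_2)\otimes\A_3\cong\A_1\otimes(\A_2\otimes\A_3)$ is an isometry for the respective minimal norms, so that identifying along this canonical isomorphism yields associativity on the nose. Similarly, the algebraic maps $A\otimes\lambda\mapsto\lambda A$ and $\lambda\otimes A\mapsto\lambda A$ extend to $C^*$-isomorphisms $\A\mintens\CC\cong\A\cong\CC\mintens\A$, and taking these identifications literally makes $\CC$ a strict two-sided unit.

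The main obstacle, and the reason $\mintens$ is singled out as the right choice, is ensuring that morphisms of $\Obs$ lift to morphisms between the tensor products, that is to injective unital $*$-homomorphisms. This is precisely the ``spatial'' (injectivity) property of the minimal $C^*$-cross norm, and it is also what makes Theorem~\ref{minimal} compatible with the tensor structure being introduced. With this property established, the remaining verifications reduce to routine identifications along canonical isomorphisms of completed tensor products.
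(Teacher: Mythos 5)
Your proposal is correct and its key step---showing that $\al_1\otimes\al_2$ is isometric by computing the minimal norm on both sides through faithful representations $\tilde{\pi}_i$ of $\tilde{\A}_i$ and using that $\tilde{\pi}_i\circ\al_i$ is then a faithful representation of $\A_i$---is exactly the argument the paper gives, which it identifies as the only nontrivial point. Your additional explicit verifications of associativity and the unit laws are routine and consistent with what the paper leaves implicit.
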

\begin{proof}
The only nontrivial step is to check that $\al_1\otimes\al_2$ is an isometric embedding. 
This follows from the injectivity of the minimal $C^*$-norm. Let $\pi_1$, $\pi_2$ be faithful representations of $\tilde{\A}_1$, $\tilde{\A}_2$ respectively on Hilbert spaces $\Hcal_1$, $\Hcal_2$. From the definition of the minimal tensor norm it follows that for an element of the algebraic tensor product $\sum_{n=1}^NA_n\otimes B_n$ we obtain:
\be\label{isometry1}
\Big\|(\al_1\otimes\al_2)\Big(\sum_{n=1}^NA_n\otimes B_n\Big)\Big\|_{\mathrm{min}}=\Big\|\sum_{n=1}^N\pi_1(\al_1(A_n))\otimes\pi_2(\al_2(B_n)) \Big\|_{\Bcal(\Hcal_1\otimes\Hcal_2)}\,.
\ee
On the other hand, since $\al_1$, $\al_2$ are isometric embeddings, $\pi_1\circ\al_1$ and $\pi_2\circ\al_2$ are faithful representations of $\A_1$ and $\A_2$ respectively. Therefore
\be\label{isometry2}
\Big\|\sum_{n=1}^NA_n\otimes B_n\Big\|_{\mathrm{min}}=\Big\|\sum_{n=1}^N\pi_1\circ\al_1(A_n)\otimes\pi_2\circ\al_2(B_n) \Big\|_{\Bcal(\Hcal_1\otimes\Hcal_2)}\,.
\ee
From \eqref{isometry1} and  \eqref{isometry2} it follows that $\al_1\otimes\al_2$ defined as above is an isometry, so in particular it is continuous and extends to arbitrary elements of the completion $\A_1\mintens\A_2$.
\end{proof}
The QFT functor $\A$ can now be extended to a functor $\A^\otimes$ between the categories $\Loc^\otimes$ and $\Obs^\otimes$. It is a covariant tensor functor if it holds:
\begin{eqnarray}
\A^\otimes\left(\Mcal_1\sqcup\Mcal_2\right)&=&\A(\Mcal_1)\otimes\A(\Mcal_2)\\
\A^\otimes(\chi\otimes\chi')&=&\A^\otimes(\chi)\otimes\A^\otimes(\chi')\\
\A^\otimes(\varnothing)&=&\CC
\end{eqnarray}
It was already mentioned in \cite{FR} that if $\A$ can be extended to a tensor functor, then the causality follows. Here we give a complete proof of this theorem.
\begin{thm}
Let $\TLoc$ and $\TObs$  be tensor categories of spacetimes and observables, respectively. Then the QFT functor $\A:\Loc\rightarrow\Obs$ is causal if and only if it can be extended to a tensor functor $\A^\otimes:\TLoc\rightarrow\TObs$.
\end{thm}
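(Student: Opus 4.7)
The plan is to prove both implications separately. The $(\Leftarrow)$ direction, that a tensor extension implies Einstein causality, is essentially formal once one unwinds the tensor structure of $\TLoc$, whereas the harder $(\Rightarrow)$ direction is where Theorem~\ref{minimal} plays a decisive role in promoting the algebraic multiplication map to a $*$-homomorphism on the completed minimal tensor product.

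For $(\Leftarrow)$, suppose $\A^{\otimes}$ exists and let $\chi_i:\Ncal_i\to\Mcal$, $i=1,2$, be admissible embeddings with spacelike separated images in $\Mcal$. Then $\chi\doteq\chi_1\sqcup\chi_2:\Ncal_1\sqcup\Ncal_2\to\Mcal$ is a morphism of $\TLoc$. Using the unit $\varnothing$ and the unique (empty) admissible embeddings $\eta_i:\varnothing\to\Ncal_i$, the canonical injections factor in $\TLoc$ as $\iota_1=\id_{\Ncal_1}\otimes\eta_2$ and $\iota_2=\eta_1\otimes\id_{\Ncal_2}$. Since $\A^{\otimes}(\varnothing)=\CC$ and $\A^{\otimes}(\eta_i)$ is forced to be the unital embedding $\lambda\mapsto\lambda\1_{\A(\Ncal_i)}$, the tensor property gives $\A^{\otimes}(\iota_1)(A)=A\otimes\1$ and $\A^{\otimes}(\iota_2)(B)=\1\otimes B$. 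From $\chi\circ\iota_i=\chi_i$ and functoriality I then obtain $\al_{\chi_1}(A)=\A^{\otimes}(\chi)(A\otimes\1)$ and $\al_{\chi_2}(B)=\A^{\otimes}(\chi)(\1\otimes B)$. These images commute because $A\otimes\1$ and $\1\otimes B$ commute in $\A(\Ncal_1)\mintens\A(\Ncal_2)$ and $\A^{\otimes}(\chi)$ is a $*$-homomorphism, yielding Einstein causality.

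For $(\Rightarrow)$, assume $\A$ is causal. On objects I define $\A^{\otimes}(\Mcal_1\sqcup\cdots\sqcup\Mcal_N)\doteq\A(\Mcal_1)\mintens\cdots\mintens\A(\Mcal_N)$ and $\A^{\otimes}(\varnothing)\doteq\CC$. For a morphism $\chi:\sqcup_{i}\Mcal_i\to\sqcup_{j}\Ncal_j$ in $\TLoc$ I would group source components by target: for each $j$, set $I_j\doteq\{i:\chi_i(\Mcal_i)\subset\Ncal_j\}$, and note that admissibility forces $\{\chi_i(\Mcal_i)\}_{i\in I_j}$ to be pairwise spacelike in $\Ncal_j$. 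By Einstein causality the algebras $\al_{\chi_i}(\A(\Mcal_i))$ pairwise commute inside $\A(\Ncal_j)$, so the multiplication map $\bigotimes_{i\in I_j}\A(\Mcal_i)\to\A(\Ncal_j)$ sending elementary tensors to ordered products is a well-defined $*$-homomorphism; the iterated version of Theorem~\ref{minimal} then certifies that it is isometric for the minimal norm and therefore extends to the completion (with the convention that for $I_j=\varnothing$ one uses the unital map $\CC\to\A(\Ncal_j)$). Tensoring these maps over $j$ via the tensor product of morphisms in $\TObs$ constructed in the preceding proposition, and composing with the canonical shuffle of tensor factors matching sources to targets, defines $\A^{\otimes}(\chi)$.

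What remains is to verify functoriality together with the tensor identities $\A^{\otimes}(\chi\otimes\chi')=\A^{\otimes}(\chi)\otimes\A^{\otimes}(\chi')$ and $\A^{\otimes}(\varnothing)=\CC$. Both reduce to checks on elementary tensors: functoriality follows by applying functoriality of $\A$ componentwise together with the observation that products of pairwise commuting families compose to products of pairwise commuting families; the tensor identity is then essentially automatic, because disjoint source and target components are sent to disjoint tensor factors by the very construction. The main obstacle in the argument lies in the well-definedness of $\A^{\otimes}$ on morphisms: Einstein causality alone yields only the algebraic multiplication map, and it is Theorem~\ref{minimal} (which rests on the additivity and split assumptions) that guarantees minimal-isometry and thus permits the extension to the completed minimal tensor product.
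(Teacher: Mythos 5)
Your proposal is correct and follows essentially the same route as the paper: the same factorization through the canonical injections $\iota_i$ for the ``$\Leftarrow$'' direction, and for ``$\Rightarrow$'' the same definition of $\A^\otimes$ on morphisms by grouping source components over target components and mapping elementary tensors to products of commuting images, with Theorem~\ref{minimal} supplying the minimal-norm isometry needed to extend to the completion. Your added justifications (deriving $\A\iota_1(A)=A\otimes\1$ from the unit object, and the convention for empty index sets) are minor refinements of the paper's argument, not a different approach.
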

\begin{proof}
To prove the ``$\Leftarrow$'' implication consider the natural embeddings $\iota_{i}:\Mcal_i\rightarrow\Mcal_1\sqcup\Mcal_2$, $i=1,2$ for which $\A\iota_1(A_1)=A_1\otimes\1$, $\A\iota_2(A_2)=\1\otimes A_2$, $A_i\in\A(\Mcal_i)$. Now let $\chi_i:\Mcal_i\rightarrow\Mcal$ be admissible embeddings such that the images of  $\chi_1$ and  $\chi_2$ are causally disjoint in $\Mcal$. We define now an admissible embedding $\chi:\Mcal_1\sqcup\Mcal_2\rightarrow\Mcal$ as:
\be
\chi(x)=\left\{
\begin{array}{lcl}
\chi_1(x)&,&x\in\Mcal_1\\
\chi_2(x)&,&x\in\Mcal_2
\end{array}
\right.
\ee
Since $\A^\otimes$ is a covariant tensor functor, it follows:
\be
[\A\chi_1(A_1),\A\chi_2(A_2)]=\A\chi[\A\iota_1(A_1),\A\iota_2(A_2)]=\A\chi[A_1\otimes\1,\1\otimes A_2]=0
\ee
This proves the causality. Now we prove the ``$\Rightarrow$'' implication. We have to define the functor $\A^\otimes$. Its action on objects is straightforward, since we can set
\[
\A^\otimes\left(\bigsqcup\limits_k\Mcal_k\right)\doteq\bigotimes\limits_k \A(\Mcal_k)\,.
\]
Now let us consider an admissible embedding $\chi:\Mcal=\bigsqcup\limits_k\Mcal_k\rightarrow\bigsqcup\limits_l\Ncal_l=\Ncal$. Let $\iota_k$ denote the inclusion maps $\iota_k: \Mcal_k\rightarrow \Mcal$ and similarly we write $\rho_l$ for maps  $\Ncal_l\rightarrow \Ncal$. Clearly $\chi\circ \iota_k$ is admissible for every $k$ and moreover there exists an $l$ such that $\chi(\Mcal_k)\subset\Ncal_l$ and the following diagram commutes:
\[
\begin{CD}
\Mcal_k @>\chi_k>>\Ncal_l\\ 
@V{\iota_k}VV     @VV{\rho_l}V\\
\Mcal @>{\chi}>>\Ncal
\end{CD}
\]
where $\chi_k\doteq\chi\upharpoonright_{\Mcal_k}$ is admissible in $\Loc$.
Let $\kappa_l\doteq\{k|\chi(\Mcal_k)\subset\Ncal_l\}$.  Since $\chi$ is admissible, the regions $\chi_k(\Mcal_k)$, $k\in\kappa_l$ are
causally disjoint in $\Ncal_l$. Let us now set:
\be\label{tensor:functor}
\A^\otimes\chi\left(\bigotimes_kA_k\right)\doteq\bigotimes\limits_l\prod\limits_{k\in\kappa_l}\A\chi_k(A_k)\,.
\ee
Since $A_k$, $k\in\kappa_l$ belong to algebras associated with causally disjoined regions, they commute and the above expression is a well defined algebraic isomorphism from $\bigotimes\limits_k \A(\Mcal_k)$ to $\bigotimes\limits_l \A(\Ncal_l)$. Moreover from proposition \ref{minimal} follows that the algebraic isomorphisms $\bigotimes\limits_{k\in\kappa_l}A_k\rightarrow\prod\limits_{k\in\kappa_l} A_k$ are continuous, so we can conclude that $\A^\otimes$ is a morphism in $\TObs$.

Now we have to check the covariance. Let $\chi':\bigsqcup\limits_l\Ncal_l\rightarrow\bigsqcup\limits_j\Lcal_j=\Lcal$ be a morphism of $\TLoc$. We define $\lambda_j\doteq\{l|\chi'(\Ncal_l)\subset\Lcal_j\}$. From the fact that both $\chi$ and $\chi'$ are admissible we can conclude that both squares in the below diagram commute:
\[
\begin{CD}
\Mcal_k @>\chi_k>>\Ncal_l@>\chi'_l>>\Lcal_j\\ 
@V{\iota_k}VV     @VV{\rho_l}V @VV{\psi_j}V\\
\Mcal @>{\chi}>>\Ncal @>{\chi'}>>\Lcal
\end{CD}
\]
It follows now that $\psi_j\circ\chi'_l\circ\chi_k=\chi'\circ\chi\circ\iota_k$ and therefore $(\chi'\circ\chi)_k=\chi'_l\circ\chi_k$. Hence
\begin{align*}
\A(\chi'\circ\chi)\left(\bigotimes_kA_k\right)&=\bigotimes\limits_j\prod\limits_{k\in\bigcup\limits_{l\in\la_j}\kappa_l}\A(\chi'\circ\chi)_k(A_k)\\
&=\bigotimes\limits_j\prod\limits_{l\in\la_j}\A\chi_l'\left(\bigotimes\limits_{k\in\kappa_l}\A\chi_k(A_k)\right)\\&=\A\chi'\left(\bigotimes\limits_l\prod\limits_{k\in\kappa_l}\A\chi_k(A_k)\right)\\
&=\A\chi'\circ\A\chi\left(\bigotimes\limits_k(A_k)\right)\,.
\end{align*}
This shows the covariance and ends the proof that \eqref{tensor:functor} indeed provides an extension of $\A$ to a tensor functor.
\end{proof}
The functor $\A^\otimes$ constructed above is covariant but it will not satisfy the split property, if in definition \ref{split} we allow the image of $\Mcal\xhookrightarrow[\chi]{}\Ncal$ to be disconnected. To fix this one could use a different definition of the tensor category. We would take the von Neumann tensor product $\boxtimes$ to define the algebra of the disjoint union of spacetimes $\Mcal_1\sqcup\Mcal_2$ as:
\[
\A^\boxtimes\left(\Mcal_1\sqcup\Mcal_2\right)\doteq \overline{\bigcup\limits_{i\in I}\Rcal_{1,i}\boxtimes\Rcal_{2,i}}\,,
\]
where we use the split property and additivity to write $\A(\Mcal_1)=\overline{\bigcup\limits_{i\in I}\Rcal_{1,i}}$,  $\A(\Mcal_2)=\overline{\bigcup\limits_{i\in I}\Rcal_{2,i}}$. The increasing families $\{\Ocal_{1,i}\}_{i\in I}$,  $\{\Ocal_{2,i}\}_{i\in I}$ and type I factors $\Rcal_{1,i}$, $\Rcal_{2,i}$  are chosen like in the proof of theorem \ref{minimal}. The objects in the category of observables would not be $C^*$-algebras, but increasing families of type I factors. This formulation seems to be less natural and more technically involved as the one with the minimal $C^*$ -tensor product. An advantage of this approach is that $\A^\boxtimes$ has again the split property.
\end{document}